\newtheorem{theorem}{Theorem}
\newtheorem{corollary}[theorem]{Corollary}
\newtheorem{definition}[theorem]{Definition}
\newtheorem{lemma}[theorem]{Lemma}
\newtheorem{problem}[theorem]{Problem}
\newtheorem{remark}[theorem]{Remark}
\newcommand{\im}{\mathbf{j}}
\begin{document}

\title{{\LARGE \textbf{Reconstruction of Directed Networks from Consensus
Dynamics}}}
\author{Shahin Shahrampour and Victor M. Preciado \thanks{$^{\dagger }$The authors are with the
Department of Electrical and Systems Engineering, University of
Pennsylvania, Philadelphia, PA 19104-6228 USA. \texttt{\small %
\{shahin,preciado\}@seas.upenn.edu}}}
\maketitle

\begin{abstract}
This paper addresses the problem of identifying the topology of an unknown,
weighted, directed network running a consensus dynamics. We propose a methodology to
reconstruct the network topology from the dynamic response when the system is stimulated by a wide-sense stationary noise of unknown power spectral density. The method is based on a node-knockout, or grounding, procedure wherein the grounded node broadcasts zero without being eliminated from the network. In this direction, we measure the empirical cross-power spectral densities of the
outputs between every pair of nodes for both grounded and ungrounded consensus to reconstruct the unknown topology of the network. We also establish that in the special cases of undirected and purely unidirectional networks, the reconstruction does not need grounding. Finally, we extend our results to the case of a directed network assuming a general dynamics, and prove that the developed method can detect edges and their direction. 
\end{abstract}

\thispagestyle{empty} \pagestyle{empty}



\section{Introduction}

In recent years, complex dynamical networks have attracted considerable
attention \cite{strogatz2001exploring}. The power grid, the Internet, the
World Wide Web, as well as many other biological, social and economic
networks \cite{jackson2008social}, are examples of networked dynamic systems
that motivate this interest. The availability of datasets describing the
structure of many real-world networks has allowed to detect the presence of
common patterns in a large variety of networks \cite%
{newman2003structure,boccaletti2006complex}. In this paper, we address the
problem of reconstructing the structure of an unknown network of dynamical
nodes from observations of its input-output behavior.

The problem of network reconstruction is crucial in a wide variety of
disciplines such as biology\cite%
{bonneau2006inferelator,geier2007reconstructing,bansal2007infer,julius2009genetic}%
, physics\cite%
{boccaletti2007detecting,timme2007revealing,napoletani2008reconstructing}
and finance\cite{mantegna2000introduction}. A wide collection of approaches
have been proposed to solve the network reconstruction problem. For example,
we find in the literature several papers that approach this problem using an
optimization framework, such as \cite{julius2009genetic}, \cite%
{napoletani2008reconstructing}, \cite{candes2008enhancing}. In these papers,
the reconstruction problem is stated as the optimization problem of finding
the network that maximizes a function that measures the sparsity of the
network (e.g. $\ell_{1}$-norm) while conforming to known a priori structural
information. Although the assumption of sparsity is well justified in some
applications (e.g. biological networks), this assumptions might lead to
unsuccessful topology inference in other cases, as illustrated in \cite%
{gonccalves2008necessary} and \cite{yuan2011robust}. When the unknown
network is known to be a tree, several techniques for network reconstruction
were proposed in \cite{mantegna2000introduction}, \cite%
{materassi2009unveiling} and \cite{materassi2010topological}. More recently,
Materassi and Salapaka proposed in \cite{materassi2012problem} a methodology
for reconstruction of directed networks using Wiener filters. Although
effective for many networks, this methodology is not exact when two
nonadjacent nodes are connected to a common node with directed edges
pointing towards the common node. In \cite{nabi2011sieve}, Nabi-Abdolyousefi
and Mesbahi proposed a technique to extract structural information, such as
node degrees, of an undirected network running a consensus dynamics.
Subsequently, they find a collection of undirected graphs that are
consistent with this structural information. Furthermore, the work in \cite%
{6203379} provides a method that performs a complete identification of
undirected networks via a procedure called node knock-out.

In this paper, we propose an approach to reconstruct the structure and weights of a directed network from the output of an agreement dynamics run on the network. We also revisit the problem of network reconstruction for more general dynamical systems. We
develop methodologies to unveil the network structure from the dynamic
response of the network when the system is driven by stochastic inputs. More specifically, we assume that the system is stimulated by a collection of wide-sense stationary noises with
unknown power spectral densities. Considering several cases, we
propose methodologies to recover the network topology from the empirical
cross-power spectral densities of the outputs between every pair of nodes. We first consider the case of undirected networks running a consensus dynamics, and propose an algorithm to reconstruct its unknown topology. Using the node knockout procedure proposed in \cite{6203379}, we extend our results to the directed networks. In this scenario, the node
knockout is equivalent to state grounding, where the node
broadcasts a zero state to its neighbors without being eliminated 
from the network. We also prove that for purely unidirectional networks (networks with no reciprocity \cite{NewmanBOOK}), there is no need to run the grounded consensus to perform the network identification. Finally, we consider the reconstruction of directed networks assuming a general dynamics. We establish that, without the knowledge of the power spectral densities of the input vector, a Boolean reconstruction is still possible, i.e., we can detect edges and their direction, but not their weights.

The rest of the paper is organized as follows. In section II, we provide
some nomenclature needed in our exposition and describe the network
reconstruction problem under consideration. Assuming the system is driven by wide-sense stationary noise, in section III.A, we provide a technique to recover the structure of an undirected network running
consensus dynamics.
Section III.B covers the extension of the problem to a directed network. Section III.C addresses the
reconstruction of directed networks following general dynamics. Section IV concludes.

\section{Preliminaries \& Problem Description}

In this section, we introduce a series of definitions used throughout the
paper. Let $\mathcal{G}=\left( \mathcal{V},\mathcal{E}\right) $ be an
unweighted, undirected graph, where $\mathcal{V}=\left\{ v_{1},\dots
,v_{n}\right\} $ denotes a set of $n$ nodes and $\mathcal{E}\subseteq 
\mathcal{V}\times \mathcal{V}$ denotes a set of $m$ undirected edges. If $%
\left\{ v_{i},v_{j}\right\} \in \mathcal{E}$, we call nodes $v_{i}$ and $%
v_{j}$ \emph{adjacent} (or first-neighbors), which we denote by $v_{i}\sim
v_{j}$.\ A \emph{weighted}, undirected graph is defined as the triad $%
\mathcal{W=}\left( \mathcal{V},\mathcal{E},\mathcal{F}\right) $, where $%
\mathcal{V}$ and $\mathcal{E}$ are the sets of nodes and edges in $\mathcal{W%
}$, and the function $\mathcal{F}:\mathcal{E\rightarrow }\mathbb{R}$
associates real weights to the edges. Similarly, a weighted, \emph{directed}
graph is defined as the triad $\mathcal{D=}\left( \mathcal{V},\mathcal{E}%
_{d},\mathcal{F}_{d}\right) $, where $\mathcal{V}$ is the set of nodes and $%
\mathcal{E}_{d}$ is the set of directed edges in $\mathcal{D}$, where a
directed edge from node $v_{i}$ to node $v_{j}$ is defined as an ordered
pair $\left( v_{i},v_{j}\right) $. Furthermore, $\mathcal{F}_{d}$ is a
weight function $\mathcal{F}_{d}:\mathcal{E}_{d}\rightarrow \mathbb{R}$.

In an unweighted, undirected graph $\mathcal{G}$, the \emph{degree} of a
vertex $v_{i}$, denoted by $\deg \left( v_{i}\right) $, is the number of
nodes adjacent to it, i.e., $\deg \left( v_{i}\right) =\left\vert \left\{
v_{j}\in \mathcal{V}:\left\{ v_{i},v_{j}\right\} \in \mathcal{E}\right\}
\right\vert $. This definition can be generalized to both weighted and
directed graphs. For weighted graphs, the weighted degree of node $v_{i}$ is
equal to $\deg \left( v_{i}\right) =\sum_{j:\left\{ v_{i},v_{j}\right\} \in 
\mathcal{E}}\mathcal{F}\left( \left\{ v_{i},v_{j}\right\} \right) $, i.e.,
the sum of the weights associated to edges connected to $v_{i}$. For
weighted, directed networks, we define the weighted \emph{in-degree} of node 
$v_{i}$ as $\deg _{in}\left( v_{i}\right) =\sum_{j:\left( v_{j},v_{i}\right)
\in \mathcal{E}_{d}}\mathcal{F}_{d}\left( \left( v_{j},v_{i}\right) \right) $%
.

The \emph{adjacency matrix} of an unweighted, undirected graph $\mathcal{G}$%
, denoted by $A_{\mathcal{G}}=[a_{ij}]$, is a $n\times n$ Boolean symmetric
matrix defined entry-wise as $a_{ij}=1$ if nodes $v_{i}$ and $v_{j}$ are
adjacent, and $a_{ij}=0$ otherwise. We define the \emph{Laplacian matrix }$%
L_{\mathcal{G}}$ of a graph $\mathcal{G}$ as $L_{\mathcal{G}}=D_{\mathcal{G}%
}-A_{\mathcal{G}}$ where $D_{\mathcal{G}}$ is the diagonal matrix of
degrees, $D_{\mathcal{G}}=diag\left( \left( \deg \left( v_{i}\right) \right)
_{i=1}^{n}\right) $. For simple graphs, $L_{\mathcal{G}}$ is a symmetric
positive semidefinite matrix, which we denote by $L_{\mathcal{G}}\succeq 0$ 
\cite{biggs1993algebraic}. Thus, $L_{\mathcal{G}}$ has a full set of $n$
real and orthogonal eigenvectors with real nonnegative eigenvalues $%
0=\lambda _{1}\leq \lambda _{2}\leq ...\leq \lambda _{n}$.

Similarly, the weighted adjacency of a weighted graph $\mathcal{W}$ is
defined as $A_{\mathcal{W}}=\left[ w_{ij}\right] $, where $w_{ij}=\mathcal{F}%
\left( \left\{ v_{i},v_{j}\right\} \right) $ for $\left\{
v_{i},v_{j}\right\} \in \mathcal{E}$, and $w_{ij}=0$ if $\left\{
v_{i},v_{j}\right\} \not\in \mathcal{E}$. We define the \emph{degree matrix}
of a weighted graph $\mathcal{W}$ as the diagonal matrix $D_{\mathcal{W}%
}=diag\left( \left( \deg \left( v_{i}\right) \right) _{i=1}^{n}\right) $.
The Laplacian matrix of a weighted, undirected graph $\mathcal{W}$, is
defined as $L_{\mathcal{W}}=D_{\mathcal{W}}-A_{\mathcal{W}}$. Furthermore,
the adjacency matrix of a weighted, directed graph $\mathcal{D}$ is defined
as $A_{\mathcal{D}}=\left[ d_{ij}\right] $, where $d_{ij}=\mathcal{F}%
_{d}\left( \left( v_{j},v_{i}\right) \right) $ for $\left(
v_{j},v_{i}\right) \in \mathcal{E}_{d}$, and $d_{ij}=0$ if $\left(
v_{j},v_{i}\right) \not\in \mathcal{E}_{d}$. We define the \emph{in-degree
matrix} of a directed graph $\mathcal{D}$ as the diagonal matrix $D_{%
\mathcal{D}}=diag\left( \left( \deg _{in}\left( v_{i}\right) \right)
_{i=1}^{n}\right) $. The Laplacian matrix of $\mathcal{D}$ is then defined
as $L_{\mathcal{D}}=D_{\mathcal{D}}-A_{\mathcal{D}}$. Note that the
Laplacian matrix satisfies $L_{\mathcal{G}}\mathbf{1}=L_{\mathcal{W}}\mathbf{%
1}=L_{\mathcal{D}}\mathbf{1}=\mathbf{0}$, i.e., the vector of all ones is an
eigenvector of the Laplacian matrix with corresponding eigenvalue $0$.

In this paper, we focus on the continuous-time non-autonomous model of
consensus dynamics in directed networks, described as
\begin{equation}
\dot{x}(t)=-L_{\mathcal{D}}x(t)+\mathbf{w}\left( t\right) \ ,\ y(t)=x(t),  \label{dynamics}
\end{equation}
where $L_{\mathcal{D}}$ is the Laplacian matrix of a weighted, directed
network $\mathcal{D}$; and the vectors $x(t),\mathbf{w}\left( t\right),y(t)\in \mathbb{R}^{n}$
are the state, input, and output vectors, respectively. We assume that we do
not have access to the network structure, i.e., $L_{\mathcal{D}}$ is
unknown. In this context, we consider the problem of identifying the
topology of the directed network $\mathcal{D}$ when the (cross-)power spectral densities of 
the output vector $y(t)$ are measured empirically while
the stochastic input vector $\mathbf{w}\left( t\right)$, injected at the nodes of
the network, has unknown power spectral characteristics.  We also investigate the reconstruction problem in the case that \eqref{dynamics} follows a general dynamics $G_{\mathcal{D}}$ in lieu of $-L_{\mathcal{D}}$. In any case, we assume the input $\mathbf{w}\left( t\right) =\left[ w_{i}\left( t\right) \right] $ is a vector of uncorrelated wide-sense stationary processes.

\begin{definition}[Wide-Sense Stationary]
A continuous-time scalar random process $w(t)$ is wide-sense stationary (WSS), if
it satisfies the following properties:

\begin{description}
\item[P1.] $\mu _{w}(t)\triangleq \mathbb{E}(w(t))=\mu _{w}(t+\tau )$ for
any $\tau \in \mathbb{R}$.\\

\item[P2.] $R_{w}(t_{1},t_{2})\triangleq \mathbb{E}%
(w(t_{1})w(t_{2}))=R_{w}(t_{1}+\tau ,t_{2}+\tau )=R_{w}(t_{1}-t_{2},0)$ for
any $\tau \in \mathbb{R}$.\\
\end{description}
\end{definition}

The reconstruction methods proposed in this paper take the output vector $%
y\left( t\right) =\left[ y_{i}\left( t\right) \right] $ and deliver the
network structure as a function of the (cross-)power spectral densities.

\begin{definition}[\textit{(Cross-)Power Spectral Density}]
The cross-power spectral density of two WSS signals, $y_{i}\left( t\right) $
and $y_{j}\left( t\right) $, is the Fourier transform of their
cross-correlation function, i.e., 
\begin{equation*}
S_{y_{i}y_{j}}(\omega )\overset{\Delta }{=}\mathcal{F}\{R_{y_{i}y_{j}}(\tau )%
\overset{\Delta }{=}\mathbb{E}(y_{i}(t)y_{j}(t-\tau ))\},
\end{equation*}%
where $\mathcal{F}\left\{ \cdot \right\} $ is the Fourier transform
operator. The power spectral density of $y_{i}\left( t\right) $ is defined
as 
\begin{equation*}
S_{y_{i}}(\omega )\overset{\Delta }{=}\mathcal{F}\{R_{y_{i}}(\tau )\overset{%
\Delta }{=}\mathbb{E}(y_{i}(t)y_{i}(t-\tau ))\}.\\
\end{equation*}
\end{definition}

\bigskip

In order to detect the links through observation of outputs, we need to ascertain that system \eqref{dynamics} is driven by a collection of noises with nonzero power spectral densities. To capture this idea, we   
commence with the following definition:

\begin{definition}[\textit{Excitation Frequency Interval}]
The excitation frequency interval of a vector $\mathbf{w}\left( t\right) $
of wide-sense stationary processes is defined as an interval $(-\Omega
,\Omega )$, with $\Omega >0$, such that the spectral densities of the input
components $w_{i}\left( t\right) $ satisfy $S_{w_{i}}(\omega )>0$ for all $%
\omega \in (-\Omega ,\Omega )$, and all $i\in \{1,2,...,n\}$.\\
\end{definition}

We end this section by stating our assumptions. Throughout the paper we impose the following conditions on the input vector:

\begin{description}
\item[A1.] The collection of signals $\left\{ w_{i}(t),i=1,...,n\right\} $
are uncorrelated zero-mean WSS processes such that, for any $t,\tau \in 
\mathbb{R}$,%
\begin{equation*}
\mathbb{E}(w_{i}(t)w_{j}(\tau ))=0,\text{ for }i\neq j,
\end{equation*}
and%
\begin{equation*}
R_{w_{i}}(\tau )=\mathbb{E}(w_{i}(t)w_{i}(t+\tau ))=R_{w}(\tau ).
\end{equation*}

\item[A2.] There exists a nonempty excitation frequency interval $\left(
-\Omega ,\Omega \right) $.
\end{description}

\section{Main Results}

We now consider several cases and present methodologies to reconstruct the structure of an
unknown network from observations of its temporal response \eqref{dynamics}.
We provide reconstruction techniques for weighted, undirected networks (Sect.
III.A), and weighted, directed networks (Sect. III.B) following consensus dynamics. We briefly, revisit the reconstruction of directed networks following general dynamics (Sect. III.C).

For the consensus dynamics (as well as general dynamics) of weighted, directed networks, the following
lemma provides an explicit relationship between the cross-power spectral
densities of two outputs, $y_{i}\left( t\right) $ and $y_{j}\left( t\right) $%
, and the power spectral density of the input $w_{k}\left( t\right) $
(which we assume to be identical, i.e., $S_{w_{k}}\left( \omega \right)
=S_{w}\left( \omega \right) $ for all $k$):

\begin{lemma}
\label{Power Spectral Densities}Given assumptions (A1)-(A2), the following
identity holds %
\begin{equation}
\frac{S_{y_{i}y_{j}}(\omega )}{S_{w}(\omega )}=\mathbf{e}_{i}^{T}(\omega
^{2}I-\im \omega (L_{\mathcal{D}}- L_{\mathcal{D}}^{T})+L_{\mathcal{D}%
}^{T}L_{\mathcal{D}})^{-1}\mathbf{e}_{j},  \label{spectrum}
\end{equation}%
for any $\omega \in \left(
-\Omega ,\Omega \right) $, where 
\begin{equation*}
S_{y_{i}y_{j}}(\omega )\overset{\Delta }{=}\mathcal{F}\{R_{y_{i}y_{j}}(\tau
)\}\ \ \text{and}\ \ S_{w}(\omega )\overset{\Delta }{=}\mathcal{F}%
\{R_{w}(\tau )\},
\end{equation*}%
for any $1\leq i,j\leq n$.
\end{lemma}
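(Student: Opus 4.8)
The plan is to work in the frequency domain. First I would take the Fourier transform of the linear SDE \eqref{dynamics}. Since $\dot x(t) = -L_{\mathcal D}x(t) + \mathbf w(t)$ with $y(t)=x(t)$, the transfer function from the input vector $\mathbf w$ to the output $y$ is $H(\omega) = (\im\omega I + L_{\mathcal D})^{-1}$, i.e. $Y(\omega) = H(\omega)W(\omega)$ in the usual input-output sense for WSS processes. The key step is then the standard relation for the output cross-spectral density matrix of a linear time-invariant system driven by a WSS vector input: $S_{yy}(\omega) = H(\omega)\,S_{ww}(\omega)\,H(\omega)^{*}$, where $H^{*}$ denotes the conjugate transpose and $S_{ww}(\omega)$ is the input cross-spectral density matrix. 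I would either cite this as a classical fact from linear systems / stochastic processes or sketch it quickly by Fourier-transforming the convolution $y = h * \mathbf w$ and using stationarity to turn the double time integral into a single spectral product.

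Next I would invoke assumption (A1): the inputs are uncorrelated with common autocorrelation $R_w(\tau)$, so $S_{ww}(\omega) = S_w(\omega) I$. Substituting this in gives $S_{yy}(\omega) = S_w(\omega)\, H(\omega) H(\omega)^{*}$. Dividing by $S_w(\omega)$ (which is legitimate and nonzero on the excitation interval $(-\Omega,\Omega)$ by assumption (A2)), the $(i,j)$ entry becomes
\begin{equation*}
\frac{S_{y_iy_j}(\omega)}{S_w(\omega)} = \mathbf e_i^{T}\, H(\omega)H(\omega)^{*}\, \mathbf e_j .
\end{equation*}
Then the only remaining task is the algebraic simplification of $H(\omega)H(\omega)^{*} = (\im\omega I + L_{\mathcal D})^{-1}(\,\overline{\im\omega I + L_{\mathcal D}}\,)^{-T}$. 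Since $L_{\mathcal D}$ is real, the conjugate transpose of $\im\omega I + L_{\mathcal D}$ is $-\im\omega I + L_{\mathcal D}^{T}$, so
\begin{equation*}
H(\omega)H(\omega)^{*} = \bigl[(-\im\omega I + L_{\mathcal D}^{T})(\im\omega I + L_{\mathcal D})\bigr]^{-1},
\end{equation*}
and expanding the product inside gives $\omega^{2}I - \im\omega(L_{\mathcal D} - L_{\mathcal D}^{T}) + L_{\mathcal D}^{T}L_{\mathcal D}$, which is exactly the matrix in \eqref{spectrum}.

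One technical point worth a line of care: I should note that for $\omega \neq 0$ the matrix $\im\omega I + L_{\mathcal D}$ is invertible (its eigenvalues are $\im\omega + \lambda$ with $\mathrm{Re}(\lambda)\ge 0$, hence nonzero), so the transfer function and the final matrix expression are well defined on the excitation interval except possibly at $\omega=0$; since \eqref{spectrum} is an identity of meromorphic functions this causes no difficulty, and one can restrict to $\omega\in(-\Omega,\Omega)\setminus\{0\}$ if desired. The conceptual content is entirely the linear-systems spectral identity $S_{yy} = H S_{ww} H^{*}$; I do not expect a genuine obstacle, only the bookkeeping of the conjugate-transpose and the fact that the consensus dynamics, while only marginally stable (zero eigenvalue of $L_{\mathcal D}$), still yields a well-defined input-output spectral relation because the mode along $\mathbf 1$ contributes a pole only at $\omega = 0$.
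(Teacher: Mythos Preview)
Your proposal is correct and follows essentially the same route as the paper: both compute $S_{yy}(\omega)=H(\omega)S_{ww}(\omega)H(\omega)^{*}$ with $H(\omega)=(\im\omega I+L_{\mathcal D})^{-1}$ and $S_{ww}(\omega)=S_w(\omega)I$, then simplify $H(\omega)H(\omega)^{*}=\bigl[(-\im\omega I+L_{\mathcal D}^{T})(\im\omega I+L_{\mathcal D})\bigr]^{-1}$. The only cosmetic difference is that the paper writes out the scalar transfer functions $H_{ki}(\omega)=\mathbf e_i^{T}H(\omega)\mathbf e_k$, superposes over the $n$ uncorrelated input channels using $\sum_k \mathbf e_k\mathbf e_k^{T}=I$, and then collapses back to the matrix expression, whereas you invoke the vector spectral identity directly.
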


\begin{proof}
The transfer function corresponding to the state-space equations
\eqref{dynamics} is $H\left(  \omega\right)  =(\im \omega I+L_{\mathcal{D}}%
)^{-1}$. The transfer function from the $k$-th input $w_{k}\left(  t\right)  $
to the $i$-th output $y_{i}\left(  t\right)  $ is defined as%
\[
H_{ki}(\omega)\overset{\Delta}{=}\mathbf{e}_{i}^{T}(\im \omega I+L_{\mathcal{D}}%
)^{-1}\mathbf{e}_{k}.
\]
Therefore, the power spectral density of the $i$-th output $y_{i}\left(
t\right)  $ when the input is $\mathbf{w}\left(  t\right)  =w_{k}\left(  t\right)
\mathbf{e}_{k}$ (i.e., a WSS noise on the $k$-th node) is equal to
\begin{equation}
S_{y_{i}}(\omega)=H_{ki}(\omega)H_{ki}^{\ast}(\omega)S_{w_{k}}(\omega
).\label{transfer1}%
\end{equation}
On the other hand, the transfer functions from input $w_{k}\left(  t\right)  $
to the outputs $y_{i}\left(  t\right)  $ and $y_{j}\left(  t\right)  $ are,
respectively, $Y_{i}\left(  \omega\right)  /W_{k}\left(  \omega\right)
=H_{ki}(\omega)$ and $Y_{j}\left(  \omega\right)  /W_{k}\left(  \omega\right)
=H_{kj}(\omega)$. Hence, $Y_{j}\left(  \omega\right)  /Y_{i}\left(
\omega\right)  =H_{ki}^{-1}(\omega)H_{kj}(\omega)$ which implies
\begin{equation}
S_{y_{i}y_{j}}(\omega)=\bigg(H_{kj}(\omega)H_{ki}^{-1}(\omega)\bigg)^\ast S_{y_{i}}%
(\omega).\label{transfer2}%
\end{equation}
Since $S_{w_{k}}(\omega)=S_{w}(\omega)$ for all $k$, we can combine
\eqref{transfer1} and \eqref{transfer2} to obtain%
\begin{equation}
S_{y_{i}y_{j}}(\omega)=H_{ki}(\omega)H_{kj}^{\ast
}(\omega)S_{w}(\omega).\label{superposition}%
\end{equation}
When the input is $\mathbf{w}\left(  t\right)=\sum_{k=1}^{n}w_{k}\left(  t\right)
\mathbf{e}_{k}$, with $\mathbb{E}(w_{i}(t)w_{j}(\tau))=0$ for $i\neq j$, we
can superpose \eqref{superposition} over $1\leq k\leq n$, to obtain the following for any $\omega \in \left(
-\Omega ,\Omega \right) $:

{\small
\begin{align*}
\frac{S_{y_{i}y_{j}}(\omega)}{S_{w}(\omega)} &  =\sum_{k=1}^{n}H_{kj}^{\ast
}(\omega)H_{ki}(\omega)\\
&  =\sum_{k=1}^{n}\mathbf{e}_{j}^{T}(-\im \omega I+L_{\mathcal{D}})^{-1}%
\mathbf{e}_{k}\mathbf{e}_{i}^{T}(\im \omega I+L_{\mathcal{D}})^{-1}\mathbf{e}%
_{k}\\
&  =\sum_{k=1}^{n}\mathbf{e}_{j}^{T}(-\im \omega I+L_{\mathcal{D}})^{-1}%
\mathbf{e}_{k}\mathbf{e}_{k}^{T}(\im \omega I+L_{\mathcal{D}}^{T})^{-1}%
\mathbf{e}_{i}\\
&  =\mathbf{e}_{j}^{T}(-\im \omega I+L_{\mathcal{D}})^{-1}(\sum_{k=1}%
^{n}\mathbf{e}_{k}\mathbf{e}_{k}^{T})(\im \omega I+L_{\mathcal{D}}^{T}%
)^{-1}\mathbf{e}_{i}.
\end{align*}
} As $\sum_{k=1}^{n}\mathbf{e}_{k}\mathbf{e}_{k}^{T}=I$, we can simplify the
last equation to
\begin{align*}
\frac{S_{y_{i}y_{j}}(\omega)}{S_{w}(\omega)} &  =\mathbf{e}_{j}^{T}(-\im \omega
I+L_{\mathcal{D}})^{-1}(\im \omega I+L_{\mathcal{D}}^{T})^{-1}\mathbf{e}_{i}\\
&  =\mathbf{e}_{i}^{T}(\omega^{2}I-\im \omega L_{\mathcal{D}}+\im \omega
L_{\mathcal{D}}^{T}+L_{\mathcal{D}}^{T}L_{\mathcal{D}})^{-1}\mathbf{e}_{j},
\end{align*}
which is the desired relation.
\end{proof}

\begin{corollary}\label{generalized lemma}
Given assumptions (A1)-(A2), and substituting $-L_{\mathcal{D}}$ in \eqref{dynamics} with any negative semi-definite $G_{\mathcal{D}}$, the input-output power spectra relationship is as following
\begin{equation*}
\frac{S_{y_{i}y_{j}}(\omega )}{S_{w}(\omega )}=\mathbf{e}_{i}^{T}(\omega
^{2}I-\im \omega (G_{\mathcal{D}}^T- G_{\mathcal{D}})+G_{\mathcal{D}%
}^{T}G_{\mathcal{D}})^{-1}\mathbf{e}_{j},  
\end{equation*}%
for any $\omega \in \left(
-\Omega ,\Omega \right) $.
\end{corollary}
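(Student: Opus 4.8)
The plan is to re-run the computation in the proof of Lemma~\ref{Power Spectral Densities} with $-L_{\mathcal{D}}$ replaced throughout by $G_{\mathcal{D}}$; equivalently, one may perform the formal substitution $L_{\mathcal{D}}\mapsto -G_{\mathcal{D}}$ in the statement of Lemma~\ref{Power Spectral Densities} and tidy up the signs. First I would record the new transfer function: with $-L_{\mathcal{D}}$ replaced by $G_{\mathcal{D}}$, the state-space model reads $\dot{x}(t)=G_{\mathcal{D}}x(t)+\mathbf{w}(t)$, $y(t)=x(t)$, so $H(\omega)=(\im\omega I-G_{\mathcal{D}})^{-1}$ and the input--output channels are $H_{ki}(\omega)=\mathbf{e}_{i}^{T}(\im\omega I-G_{\mathcal{D}})^{-1}\mathbf{e}_{k}$. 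Since $G_{\mathcal{D}}$ is negative semi-definite its spectrum lies in the closed left half-plane, so $\im\omega I-G_{\mathcal{D}}$ is invertible on the excitation interval exactly under the same mild caveat that $-L_{\mathcal{D}}$ required in the consensus case, and $H(\omega)$ is well defined.

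Next I would reuse, essentially verbatim, the three identities of that proof: for a single WSS input on node $k$, $S_{y_{i}}(\omega)=H_{ki}(\omega)H_{ki}^{\ast}(\omega)S_{w_{k}}(\omega)$ and $S_{y_{i}y_{j}}(\omega)=\big(H_{kj}(\omega)H_{ki}^{-1}(\omega)\big)^{\ast}S_{y_{i}}(\omega)$, which combine (using $S_{w_{k}}=S_{w}$ for all $k$) into $S_{y_{i}y_{j}}(\omega)=H_{ki}(\omega)H_{kj}^{\ast}(\omega)S_{w}(\omega)$. Because the components of $\mathbf{w}(t)$ are uncorrelated by (A1), I superpose over $k=1,\dots,n$, write $H_{kj}^{\ast}(\omega)=\mathbf{e}_{j}^{T}(-\im\omega I-G_{\mathcal{D}})^{-1}\mathbf{e}_{k}$ and $H_{ki}(\omega)=\mathbf{e}_{k}^{T}(\im\omega I-G_{\mathcal{D}}^{T})^{-1}\mathbf{e}_{i}$, and collapse the sum via $\sum_{k=1}^{n}\mathbf{e}_{k}\mathbf{e}_{k}^{T}=I$ to get
\begin{equation*}
\frac{S_{y_{i}y_{j}}(\omega)}{S_{w}(\omega)}=\mathbf{e}_{j}^{T}(-\im\omega I-G_{\mathcal{D}})^{-1}(\im\omega I-G_{\mathcal{D}}^{T})^{-1}\mathbf{e}_{i}.
\end{equation*}
Finally, since the left-hand side is a scalar I transpose it to obtain $\mathbf{e}_{i}^{T}\big[(-\im\omega I-G_{\mathcal{D}}^{T})(\im\omega I-G_{\mathcal{D}})\big]^{-1}\mathbf{e}_{j}$ and expand the product, which yields $\omega^{2}I-\im\omega(G_{\mathcal{D}}^{T}-G_{\mathcal{D}})+G_{\mathcal{D}}^{T}G_{\mathcal{D}}$, i.e., the claimed identity.

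I do not expect a substantive obstacle here: the corollary is the same resolvent manipulation as Lemma~\ref{Power Spectral Densities}. The only points requiring care are the sign bookkeeping (since replacing $-L_{\mathcal{D}}$ by $G_{\mathcal{D}}$ amounts to $L_{\mathcal{D}}=-G_{\mathcal{D}}$, which flips the sign of the term linear in $\im\omega$, so that $-\im\omega(L_{\mathcal{D}}-L_{\mathcal{D}}^{T})$ becomes $-\im\omega(G_{\mathcal{D}}^{T}-G_{\mathcal{D}})$) and verifying that negative semi-definiteness of $G_{\mathcal{D}}$ is exactly the hypothesis needed to keep the transfer-function/WSS-output manipulations legitimate on $(-\Omega,\Omega)$, playing the role that $-L_{\mathcal{D}}$ played in the consensus setting.
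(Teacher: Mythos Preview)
Your proposal is correct and follows essentially the same approach as the paper: the paper's proof simply observes that no Laplacian-specific property was used in Lemma~\ref{Power Spectral Densities}, so one may substitute $-L_{\mathcal{D}}\mapsto G_{\mathcal{D}}$ directly in \eqref{spectrum}. You have just made that substitution and the accompanying sign bookkeeping explicit.
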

\begin{proof}
In the proof of Lemma \ref{Power Spectral Densities}, we did not use any properties of Laplacian, so we only need to replace $-L_{\mathcal{D}}$ by any negative semi-definite $G_{\mathcal{D}}$ in \eqref{spectrum} to derive the result. 
\end{proof}

In our analysis in the next subsections, we frequently invoke the result of Lemma \ref{Power Spectral Densities} and Corollary \ref{generalized lemma}.

\subsection{Undirected Laplacian Identification from Stochastic Inputs}

In this subsection, we propose an approach to reconstruct the topology of an
unknown weighted, undirected network $\mathcal{W}$ from the output of %
\eqref{dynamics}.  The formal statement of the reconstruction problem for
weighted, undirected network can be stated as follows:

\begin{problem}
\label{Undirected Random Problem}Consider the dynamical network in %
\eqref{dynamics}, with $L_{\mathcal{D}}\equiv L_{\mathcal{W}}$ where $%
\mathcal{W}$ is an unknown weighted, undirected graph. Find the structure of 
$\mathcal{W}$, from the
empirical (cross-)power spectral densities\footnote{%
There are several methods to empirically measure the (cross-)power spectral
densities (see e.g., Welch's method\cite{welch1967use}).} of the outputs,
i.e., $S_{y_{i}}(\omega )$ and $S_{y_{i}y_{j}}(\omega )$ for all $1\leq
i,j\leq n$.
\end{problem}

In the case of weighted, undirected networks, we can use the result of Lemma \ref{Power Spectral Densities} in the form of the
following corollary:

\begin{corollary}
\label{Cross-Power Corollary}Consider the network dynamics \eqref{dynamics},
when $\mathcal{D}$ is a weighted, undirected network $\mathcal{W}$. Given
assumptions (A1)-(A2), the (cross-)power spectral densities satisfy%
\begin{equation}
\frac{S_{y_{i}y_{j}}(\omega )}{S_{w}(\omega )}=\mathbf{e}_{i}^{T}(\omega
^{2}I+L_{\mathcal{W}}^{2})^{-1}\mathbf{e}_{j},
\end{equation}%
for any $1\leq i,j\leq n$.
\end{corollary}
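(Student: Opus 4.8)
The plan is to obtain this simply as a specialization of Lemma \ref{Power Spectral Densities} to the symmetric case. First I would observe that when $\mathcal{D}$ is in fact a weighted, undirected graph $\mathcal{W}$, its Laplacian $L_{\mathcal{W}}=D_{\mathcal{W}}-A_{\mathcal{W}}$ is a symmetric matrix, since the weighted adjacency matrix $A_{\mathcal{W}}=[w_{ij}]$ is symmetric ($w_{ij}=w_{ji}$) and $D_{\mathcal{W}}$ is diagonal. Hence, with $L_{\mathcal{D}}\equiv L_{\mathcal{W}}$ we have $L_{\mathcal{D}}^{T}=L_{\mathcal{D}}=L_{\mathcal{W}}$.

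Next I would substitute this identity into \eqref{spectrum}. The skew-symmetric contribution drops out, $L_{\mathcal{D}}-L_{\mathcal{D}}^{T}=0$, and the quadratic term becomes $L_{\mathcal{D}}^{T}L_{\mathcal{D}}=L_{\mathcal{W}}^{2}$. Therefore \eqref{spectrum} reduces to
\begin{equation*}
\frac{S_{y_{i}y_{j}}(\omega )}{S_{w}(\omega )}=\mathbf{e}_{i}^{T}(\omega ^{2}I+L_{\mathcal{W}}^{2})^{-1}\mathbf{e}_{j},
\end{equation*}
which is exactly the claimed identity. One small point worth noting is well-posedness: since $L_{\mathcal{W}}$ is symmetric, $L_{\mathcal{W}}^{2}\succeq 0$, so $\omega ^{2}I+L_{\mathcal{W}}^{2}$ is positive definite and invertible for every $\omega \in (-\Omega ,\Omega )$ with $\omega \neq 0$; the excitation-interval assumption (A2) guarantees a nonempty range of such frequencies, so the right-hand side is well defined wherever it is used.

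There is essentially no obstacle here — the statement is a direct corollary of Lemma \ref{Power Spectral Densities}. The only thing to be mildly careful about is the ordering of the indices $i,j$: Lemma \ref{Power Spectral Densities} delivers $\mathbf{e}_{i}^{T}(\cdot)^{-1}\mathbf{e}_{j}$, and in the undirected case the matrix $(\omega ^{2}I+L_{\mathcal{W}}^{2})^{-1}$ is itself symmetric, so the expression is symmetric in $i$ and $j$, consistent with the symmetry of the real cross-correlation structure. I would state the argument in a couple of lines, invoking Lemma \ref{Power Spectral Densities} and the symmetry of $L_{\mathcal{W}}$, and leave the rest to inspection.
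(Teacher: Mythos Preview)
Your proposal is correct and matches the paper's own proof essentially verbatim: the paper simply notes that $L_{\mathcal{W}}$ is symmetric and substitutes $L_{\mathcal{D}}\equiv L_{\mathcal{W}}$ into \eqref{spectrum}, which is exactly what you do. Your additional remarks on invertibility and symmetry in $i,j$ are fine but not needed.
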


\begin{proof}
Since $L_\mathcal{W}$ is symmetric, replacing $L_\mathcal{D}$ in \eqref{spectrum} by $L_\mathcal{W}$, the proof follows immediately. 
\end{proof}

We now proceed with a theorem that can be used to reconstruct a
weighted, undirected Laplacian $L_{\mathcal{W}}$, from the empirical
(cross-)power spectral densities $S_{y_{i}}(\omega )$ and $%
S_{y_{i}y_{j}}(\omega )$ for all $1\leq i,j\leq n$.

\begin{theorem}
\label{Undirected Laplacian}Consider the network dynamics \eqref{dynamics},
when $\mathcal{D}$ is a weighted, undirected network $\mathcal{W}$. Let us
define the matrix of cross-correlations as $\mathbf{S}\left( \omega \right) =%
\left[ S_{y_{i}y_{j}}(\omega )\right] $. Then, given assumptions (A1)-(A2),
we can recover the weighted, undirected Laplacian $L_{\mathcal{W}}$ as%
\begin{equation}
L_{\mathcal{W}}=\omega \left( \frac{\mathbf{S}^{-1}\left( \omega \right) }{%
\left[ \mathbf{S}^{-1}(\omega )\mathbf{1}\right] _{i}}-I\right) ^{1/2}, \label{undlap}
\end{equation}%
for any $i$, and for any $\omega $ in the excitation frequency interval $%
\left( -\Omega ,\Omega \right) $.
\end{theorem}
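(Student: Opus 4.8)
The plan is to start from Corollary~\ref{Cross-Power Corollary}, which gives $\mathbf{S}(\omega) = S_w(\omega)\,(\omega^2 I + L_{\mathcal{W}}^2)^{-1}$. Inverting this relation yields $\mathbf{S}^{-1}(\omega) = \tfrac{1}{S_w(\omega)}(\omega^2 I + L_{\mathcal{W}}^2)$, so the matrix $\mathbf{S}^{-1}(\omega)$ recovers $L_{\mathcal{W}}^2$ up to the unknown scalar $S_w(\omega)$ and the known shift $\omega^2 I$. The whole difficulty is therefore pinning down the scalar $S_w(\omega)$ from the measured data alone.

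To do that I would exploit the fact that $L_{\mathcal{W}}\mathbf{1} = \mathbf{0}$, hence $L_{\mathcal{W}}^2 \mathbf{1} = \mathbf{0}$, so $(\omega^2 I + L_{\mathcal{W}}^2)\mathbf{1} = \omega^2 \mathbf{1}$. Applying $\mathbf{S}^{-1}(\omega)$ to $\mathbf{1}$ gives $\mathbf{S}^{-1}(\omega)\mathbf{1} = \tfrac{\omega^2}{S_w(\omega)}\mathbf{1}$, so every component of the vector $\mathbf{S}^{-1}(\omega)\mathbf{1}$ equals $\omega^2/S_w(\omega)$; in particular $[\mathbf{S}^{-1}(\omega)\mathbf{1}]_i = \omega^2/S_w(\omega)$ for any index $i$, which identifies $S_w(\omega) = \omega^2 / [\mathbf{S}^{-1}(\omega)\mathbf{1}]_i$. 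This is the step that makes the grounding procedure unnecessary in the undirected case, and it is the crux of the argument.

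Substituting this value of $S_w(\omega)$ back into $\mathbf{S}^{-1}(\omega) = \tfrac{1}{S_w(\omega)}(\omega^2 I + L_{\mathcal{W}}^2)$ gives
\begin{equation*}
\frac{\mathbf{S}^{-1}(\omega)}{[\mathbf{S}^{-1}(\omega)\mathbf{1}]_i} = \frac{1}{\omega^2}\left(\omega^2 I + L_{\mathcal{W}}^2\right) = I + \frac{1}{\omega^2}L_{\mathcal{W}}^2,
\end{equation*}
so that $L_{\mathcal{W}}^2 = \omega^2\left(\dfrac{\mathbf{S}^{-1}(\omega)}{[\mathbf{S}^{-1}(\omega)\mathbf{1}]_i} - I\right)$. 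Finally I would take the matrix square root. Here one needs to check that the square root is well defined and singles out $L_{\mathcal{W}}$: $L_{\mathcal{W}}$ is symmetric positive semidefinite, hence it is the unique symmetric positive semidefinite square root of $L_{\mathcal{W}}^2$, and the principal square root $(\cdot)^{1/2}$ in \eqref{undlap} returns exactly that matrix. This yields the claimed formula, valid for any $\omega \in (-\Omega,\Omega)$ since the excitation interval guarantees $S_w(\omega) > 0$, so that $\mathbf{S}(\omega)$ is invertible and the manipulations above are legitimate; the choice of index $i$ is immaterial because $\mathbf{S}^{-1}(\omega)\mathbf{1}$ is a constant multiple of $\mathbf{1}$.

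The only genuine obstacle is the identifiability of the scalar input spectrum $S_w(\omega)$; once the kernel relation $L_{\mathcal{W}}\mathbf{1}=\mathbf{0}$ is used to extract it from $\mathbf{S}^{-1}(\omega)\mathbf{1}$, the rest is algebraic rearrangement plus the standard uniqueness of the positive semidefinite square root. I would also remark that $\omega\neq 0$ must be assumed (which is implicit since we need a nondegenerate interval and would in any case divide by $\omega^2$), and that in practice $\mathbf{S}(\omega)$ is replaced by its empirical estimate, for which the formula still holds approximately by continuity.
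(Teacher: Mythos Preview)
Your proposal is correct and follows essentially the same approach as the paper: invert the relation from Corollary~\ref{Cross-Power Corollary}, use $L_{\mathcal{W}}^2\mathbf{1}=\mathbf{0}$ to identify $S_w(\omega)$ from $\mathbf{S}^{-1}(\omega)\mathbf{1}$, substitute back, and take the positive semidefinite square root. You are slightly more explicit than the paper about the uniqueness of the square root and the need for $\omega\neq 0$, but the argument is the same.
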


\begin{proof}
According to Corollary \ref{Cross-Power Corollary}, we have
\[
\mathbf{S}(\omega)=S_{w}(\omega)(\omega^{2}I+L_{\mathcal{W}}^{2})^{-1},
\]
which yields
\begin{equation}
\mathbf{S}^{-1}(\omega)=\frac{1}{S_{w}(\omega)}(\omega^{2}I+L_{\mathcal{W}%
}^{2}).\label{lap}%
\end{equation}
Hence,%
\begin{equation}
L_{\mathcal{W}}^{2}=S_{w}(\omega)\mathbf{S}^{-1}(\omega)-\omega^{2}%
I.\label{Lw2}%
\end{equation}
We can derive an expression for $S_{w}(\omega)$ in terms of $\mathbf{S}%
^{-1}(\omega)$ and $\omega$, as follows. Post-multiplying \eqref{lap} by the
vector $\mathbf{1}$, we get
\begin{align}
\mathbf{S}^{-1}(\omega)\mathbf{1}=\frac{1}{S_{w}(\omega)}(\omega
^{2}I+L_{\mathcal{W}}^{2})\mathbf{1}=\frac{\omega^{2}}{S_{w}(\omega
)}\mathbf{1},\label{average}
\end{align}
since $L_{\mathcal{W}}^{2}\mathbf{1}=0$. Therefore,%
\begin{equation}
S_{w}(\omega)=\frac{\omega^{2}}{\left[  \mathbf{S}^{-1}(\omega)\mathbf{1}%
\right]  _{i}},\label{Sw}%
\end{equation}
for any $i$. Substituting (\ref{Sw}) in (\ref{Lw2}), we obtain the statement
of our Theorem. (Notice that $L_{\mathcal{W}}$ is the Laplacian of an
undirected network, hence, it is positive semidefinite).
\end{proof}

In practice, since $\left[ \mathbf{S}^{-1}(\omega )\mathbf{1}\right] _{i}$ is a noisy measurement, according to \eqref{average} one can average out noise by replacing $\left[ \mathbf{S}^{-1}(\omega )\mathbf{1}\right] _{i}$ with $\mathbf{1}^T \mathbf{S}^{-1}(\omega )\mathbf{1}/n$, to achieve improved empirical results. Hence, \eqref{undlap} can be rewritten as
\begin{equation*}
L_{\mathcal{W}}=\omega \left( \frac{\mathbf{S}^{-1}\left( \omega \right) }{%
\mathbf{1}^T \mathbf{S}^{-1}(\omega )\mathbf{1}}n-I\right) ^{1/2}. 
\end{equation*}%

\begin{remark}
To derive node degrees computationally efficient, one can circumvent the implicit eigenvalue decomposition involved in the result of Theorem\ref{Undirected Laplacian} for computing $L_{\mathcal{W}}$ from $L^2_{\mathcal{W}}$, since 
\begin{align*}
[L^2_{\mathcal{W}}]_{ii}&=[(D_{\mathcal{W}}-A_{\mathcal{W}})^2]_{ii}\\
&=[D_{\mathcal{W}}^2-D_{\mathcal{W}}A_{\mathcal{W}}-A_{\mathcal{W}}D_{\mathcal{W}}+A_{\mathcal{W}}^2]_{ii}\\
&=\deg^2(v_i)+\deg(v_i).
\end{align*}
Therefore, computing $L^2_{\mathcal{W}}$, we obtain the degrees by solving the quadratic equation above for any $i$.    
\end{remark}

\subsection{Directed Laplacian Identification from Stochastic Inputs}

In this section we address the problem of reconstructing a weighted,
directed Laplacian $L_{\mathcal{D}}$ when the input is a vector of
uncorrelated wide-sense stationary processes $\mathbf{w}\left( t\right) $.

\begin{problem}
\label{Directed Consensus Problem}Consider the dynamical network in %
\eqref{dynamics}, where $\mathcal{D}$ is an unknown weighted, directed
graph. Reconstruct $\mathcal{D}$, from the empirical (cross-)power spectral densities of the
outputs, i.e., $S_{y_{i}}(\omega )$ and $S_{y_{i}y_{j}}(\omega )$ for all $%
1\leq i,j\leq n$.
\end{problem}

In what follows, we propose a reconstruction approach based on a grounded
consensus dynamics, similar to the one proposed in \cite{6203379} for the
reconstruction of undirected graphs. The grounded consensus is defined as
follows:

\begin{definition}[\textit{Grounded Consesus}]\label{groundedcon}
The consensus dynamics with node $v_{j}$ grounded takes the form 
\begin{equation}
\dot{x}(t)=-\tilde{L}_{\mathcal{D}_{j}}x(t)+\mathbf{w}(t)\ ,\ y(t)=x(t),
\label{groundedlaplacian}
\end{equation}%
where $\tilde{L}_{\mathcal{D}_{j}}\in \mathbb{R}^{(n-1)\times (n-1)}$ is
obtained by eliminating the $j$-th row and the $j$-th column from $L_{%
\mathcal{D}}$.
\end{definition}

The consensus dynamics with node $v_{j}$ grounded describes the evolution of
the network when we force the state of node $v_{j}$ to be $x_{j}(t)\equiv 0$%
. We now state a lemma and a corollary to extract $L_{\mathcal{D}}^{T}L_{%
\mathcal{D}}$ from the ungrounded consensus \eqref{dynamics}, and $\tilde{L}%
_{\mathcal{D}_{j}}^{T}\tilde{L}_{\mathcal{D}_{j}}$ from the grounded
consensus \eqref{groundedlaplacian}, respectively.

\begin{lemma}
\label{Lem2}Consider the network dynamics \eqref{dynamics}, when $\mathcal{D}$ is a
weighted, directed network. Let us define the matrix of cross-correlations
as $\mathbf{S}\left( \omega \right) =\left[ S_{y_{i}y_{j}}(\omega )\right] $%
. Then, given assumptions (A1)-(A2), we can compute $L_{\mathcal{D}}^{T}L_{%
\mathcal{D}}$ and $L_{\mathcal{D}}-L^T_{\mathcal{D}}$ as
\begin{align*}
&L_{\mathcal{D}}^{T}L_{\mathcal{D}}=\omega ^{2}\bigg(\frac{\text{Re}\{\mathbf{S}^{-1}(\omega)\}}{[\text{Re}\{\mathbf{S}^{-1}(\omega)\}\mathbf{1}]_{i}} -I \bigg)\\
&L_{\mathcal{D}}-L^T_{\mathcal{D}}=-\omega \bigg(\frac{\text{Im}\{\mathbf{S}^{-1}(\omega)\}}{[\text{Re}\{\mathbf{S}^{-1}(\omega)\}\mathbf{1}]_{i}} \bigg),
\end{align*}%
for any $i$, and for any $\omega $ in the excitation frequency interval $%
\left( -\Omega ,\Omega \right) $.
\end{lemma}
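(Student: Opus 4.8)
The plan is to start from the identity in Lemma~\ref{Power Spectral Densities}, which gives $\mathbf{S}(\omega)/S_w(\omega) = M(\omega)^{-1}$ with $M(\omega) = \omega^2 I - \im\omega(L_{\mathcal{D}} - L_{\mathcal{D}}^T) + L_{\mathcal{D}}^T L_{\mathcal{D}}$. Inverting, we obtain
\begin{equation*}
\mathbf{S}^{-1}(\omega) = \frac{1}{S_w(\omega)}\bigl(\omega^2 I - \im\omega(L_{\mathcal{D}} - L_{\mathcal{D}}^T) + L_{\mathcal{D}}^T L_{\mathcal{D}}\bigr).
\end{equation*}
Since $L_{\mathcal{D}}$ is real, $L_{\mathcal{D}}^T L_{\mathcal{D}}$ is real symmetric and $L_{\mathcal{D}} - L_{\mathcal{D}}^T$ is real (and skew-symmetric), while $S_w(\omega)$ is real-valued (it is the Fourier transform of an autocorrelation function, hence real and nonnegative). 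Therefore the real and imaginary parts of $\mathbf{S}^{-1}(\omega)$ separate cleanly: $\mathrm{Re}\{\mathbf{S}^{-1}(\omega)\} = \tfrac{1}{S_w(\omega)}(\omega^2 I + L_{\mathcal{D}}^T L_{\mathcal{D}})$ and $\mathrm{Im}\{\mathbf{S}^{-1}(\omega)\} = -\tfrac{\omega}{S_w(\omega)}(L_{\mathcal{D}} - L_{\mathcal{D}}^T)$.

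Next I would eliminate the unknown scalar $S_w(\omega)$ exactly as in the proof of Theorem~\ref{Undirected Laplacian}. Post-multiplying the real-part relation by $\mathbf{1}$ and using $L_{\mathcal{D}}\mathbf{1} = \mathbf{0}$, hence $L_{\mathcal{D}}^T L_{\mathcal{D}}\mathbf{1} = \mathbf{0}$, gives $\mathrm{Re}\{\mathbf{S}^{-1}(\omega)\}\mathbf{1} = \tfrac{\omega^2}{S_w(\omega)}\mathbf{1}$, so that $S_w(\omega) = \omega^2 / [\mathrm{Re}\{\mathbf{S}^{-1}(\omega)\}\mathbf{1}]_i$ for any $i$. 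Substituting this back into the two displayed part-relations and rearranging yields
\begin{equation*}
L_{\mathcal{D}}^T L_{\mathcal{D}} = \omega^2\Bigl(\tfrac{\mathrm{Re}\{\mathbf{S}^{-1}(\omega)\}}{[\mathrm{Re}\{\mathbf{S}^{-1}(\omega)\}\mathbf{1}]_i} - I\Bigr), \qquad L_{\mathcal{D}} - L_{\mathcal{D}}^T = -\omega\,\tfrac{\mathrm{Im}\{\mathbf{S}^{-1}(\omega)\}}{[\mathrm{Re}\{\mathbf{S}^{-1}(\omega)\}\mathbf{1}]_i},
\end{equation*}
which are the claimed formulas.

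The one point that needs genuine care — the main obstacle — is justifying that the real/imaginary decomposition of $\mathbf{S}^{-1}(\omega)$ really matches term-by-term, i.e. that $S_w(\omega)$ contributes no phase and that $\omega^2 I + L_{\mathcal{D}}^T L_{\mathcal{D}}$ is exactly the real part with no cross-contamination. This rests on $S_w(\omega)\in\mathbb{R}$ (true for any WSS process, since $R_w(\tau)$ is even and real under A1), on $\omega\in(-\Omega,\Omega)$ being real so $\im\omega$ is purely imaginary, and on $L_{\mathcal{D}}$ being a real matrix. I would also note in passing that invertibility of $\mathbf{S}(\omega)$ — equivalently of $M(\omega)$ — holds on the excitation interval: for real $\omega\neq 0$, $M(\omega) = \omega^2 I + (\im\omega I - L_{\mathcal{D}})^*(\im\omega I - L_{\mathcal{D}})$ is Hermitian positive definite, and at $\omega = 0$ one simply excludes the point or takes limits, so $S_{y_iy_j}$ are well defined wherever $S_w(\omega)>0$. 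The rest is the same elimination-of-$S_w$ trick already used for the undirected case, so no new idea is required beyond separating real and imaginary parts.
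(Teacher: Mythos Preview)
Your proof is correct and follows essentially the same route as the paper: invert the identity from Lemma~\ref{Power Spectral Densities}, split $\mathbf{S}^{-1}(\omega)$ into its real and imaginary parts, post-multiply the real part by $\mathbf{1}$ using $L_{\mathcal{D}}\mathbf{1}=\mathbf{0}$ to isolate $S_w(\omega)$, and substitute back. One minor slip in your invertibility aside: the correct factorization is $M(\omega)=(\im\omega I + L_{\mathcal{D}})^{*}(\im\omega I + L_{\mathcal{D}})$, not $\omega^{2}I+(\im\omega I - L_{\mathcal{D}})^{*}(\im\omega I - L_{\mathcal{D}})$ (the latter would produce an extra $\omega^{2}I$ and the wrong sign on the skew part), but this does not affect your main argument.
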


\begin{proof}
Let us consider the matrix $\mathbf{S}(\omega)=[S_{y_{i}y_{j}}(\omega)]$.
According to (\ref{spectrum}), we can separate the real and imaginary parts of
its inverse as%
\begin{align}
&  \text{Re}\{\mathbf{S}^{-1}(\omega)\}=\frac{1}{S_{w}(\omega)}(\omega
^{2}I+L_{\mathcal{D}}^{T}L_{\mathcal{D}})\label{real}\\
&  \text{Im}\{\mathbf{S}^{-1}(\omega)\}=-\frac{1}{S_{w}(\omega)}(\omega
L_{\mathcal{D}}-\omega L_{\mathcal{D}}^{T}).\label{imaginary}%
\end{align}
Post-multiplying \eqref{real} by $\mathbf{1}$, and taking into consideration
that $L_{\mathcal{D}}^{T}L_{\mathcal{D}}\mathbf{1=0}$, we obtain%
\begin{align}
S_{w}(\omega)=\frac{\omega^{2}}{\left[  \text{Re}\{\mathbf{S}^{-1}%
(\omega)\}\mathbf{1}\right]  _{i}}, \label{average2}
\end{align}
for any $i$. Plugging \eqref{average2} into (\ref{real}) and \eqref{imaginary}, the proof follows immediately.
\end{proof}

\begin{corollary}
\label{LDtilde}Consider the network dynamics \eqref{groundedlaplacian}, where $\tilde{L}_{%
\mathcal{D}_{j}}$ is the grounded Laplacian matrix of the weighted, directed
network $\mathcal{D}$ grounded at $v_{j}$. Let us define the matrix of
cross-correlations as $\mathbf{\tilde{S}}\left( \omega \right) =[
\tilde{S}_{y_{i}y_{k}}(\omega )] $ for all $i,k\neq j$. Then, given
assumptions (A1)-(A2), we can compute $\tilde{L}_{\mathcal{D}_{j}}^{T}\tilde{%
L}_{\mathcal{D}_{j}}$ as%
\begin{equation*}
\tilde{L}_{\mathcal{D}_{j}}^{T}\tilde{L}_{\mathcal{D}_{j}}=\omega ^{2}\bigg(\frac{\text{Re}\{\tilde{\mathbf{S}}^{-1}(\omega)\}}{[\text{Re}\{\mathbf{S}^{-1}(\omega)\}\mathbf{1}]_{i}} -I \bigg),
\end{equation*}
for any $i\neq j$, and for any $\omega $ in the excitation frequency
interval $\left( -\Omega ,\Omega \right) $.
\end{corollary}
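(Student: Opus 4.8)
The plan is to apply the grounded analogue of Lemma \ref{Lem2} to the system \eqref{groundedlaplacian}. The key observation is that the grounded dynamics \eqref{groundedlaplacian} has exactly the same structure as \eqref{dynamics}, with $L_{\mathcal{D}}$ replaced by the $(n-1)\times(n-1)$ matrix $\tilde{L}_{\mathcal{D}_{j}}$ and the input restricted to the $n-1$ ungrounded coordinates. Since the proof of Lemma \ref{Power Spectral Densities} never used any special structure of $L_{\mathcal{D}}$ (as noted in Corollary \ref{generalized lemma}), the same derivation applies verbatim to $\tilde{L}_{\mathcal{D}_{j}}$: the cross-power spectral density matrix $\tilde{\mathbf{S}}(\omega)=[\tilde{S}_{y_iy_k}(\omega)]_{i,k\neq j}$ satisfies
\begin{equation*}
\tilde{\mathbf{S}}(\omega)=S_w(\omega)\,\bigl(\omega^2 I-\im\omega(\tilde{L}_{\mathcal{D}_j}-\tilde{L}_{\mathcal{D}_j}^T)+\tilde{L}_{\mathcal{D}_j}^T\tilde{L}_{\mathcal{D}_j}\bigr)^{-1},
\end{equation*}
so that, taking real parts of the inverse as in \eqref{real},
\begin{equation*}
\text{Re}\{\tilde{\mathbf{S}}^{-1}(\omega)\}=\frac{1}{S_w(\omega)}\bigl(\omega^2 I+\tilde{L}_{\mathcal{D}_j}^T\tilde{L}_{\mathcal{D}_j}\bigr).
\end{equation*}
Rearranging gives $\tilde{L}_{\mathcal{D}_j}^T\tilde{L}_{\mathcal{D}_j}=S_w(\omega)\text{Re}\{\tilde{\mathbf{S}}^{-1}(\omega)\}-\omega^2 I$, which is the claimed formula once $S_w(\omega)$ is substituted.

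The one subtlety — and the main thing to get right — is why $S_w(\omega)$ appearing here is the \emph{same} quantity as the one extracted from the ungrounded system, namely $S_w(\omega)=\omega^2/[\text{Re}\{\mathbf{S}^{-1}(\omega)\}\mathbf{1}]_i$ from \eqref{average2}. This is immediate from assumption (A1): the common input power spectral density $S_w(\omega)$ is a property of the driving noise, not of the network, so running either the grounded or ungrounded consensus with the same family of WSS inputs yields the same $S_w(\omega)$. Hence I would simply invoke \eqref{average2} from the proof of Lemma \ref{Lem2} to replace $S_w(\omega)$ by $\omega^2/[\text{Re}\{\mathbf{S}^{-1}(\omega)\}\mathbf{1}]_i$ (for any $i\neq j$), yielding
\begin{equation*}
\tilde{L}_{\mathcal{D}_j}^T\tilde{L}_{\mathcal{D}_j}=\omega^2\biggl(\frac{\text{Re}\{\tilde{\mathbf{S}}^{-1}(\omega)\}}{[\text{Re}\{\mathbf{S}^{-1}(\omega)\}\mathbf{1}]_i}-I\biggr).
\end{equation*}

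Note that, unlike the ungrounded case, one cannot normalize using $\text{Re}\{\tilde{\mathbf{S}}^{-1}(\omega)\}\mathbf{1}$ directly, because $\tilde{L}_{\mathcal{D}_j}^T\tilde{L}_{\mathcal{D}_j}\mathbf{1}\neq\mathbf{0}$ in general (the all-ones vector is no longer in the kernel after a row and column are deleted). This is precisely why the statement borrows the denominator $[\text{Re}\{\mathbf{S}^{-1}(\omega)\}\mathbf{1}]_i$ from the ungrounded system — so the proof genuinely needs both the grounded measurement $\tilde{\mathbf{S}}(\omega)$ and the ungrounded measurement $\mathbf{S}(\omega)$. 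I expect no real obstacle beyond being careful about this point: the algebra is a transcription of the proof of Lemma \ref{Lem2}, and the only new ingredient is the cross-system identification of $S_w(\omega)$, which rests on (A1). I would close by remarking that the formula holds for every $\omega\in(-\Omega,\Omega)$ and every $i\neq j$, consistent with the statement.
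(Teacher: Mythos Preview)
Your proposal is correct and follows essentially the same approach as the paper: apply Corollary~\ref{generalized lemma} to the grounded system to obtain $\text{Re}\{\tilde{\mathbf{S}}^{-1}(\omega)\}=\frac{1}{S_w(\omega)}(\omega^2 I+\tilde{L}_{\mathcal{D}_j}^T\tilde{L}_{\mathcal{D}_j})$, then substitute the expression for $S_w(\omega)$ from \eqref{average2} in Lemma~\ref{Lem2}. Your added remark explaining \emph{why} the denominator must come from the ungrounded system (since $\tilde{L}_{\mathcal{D}_j}^T\tilde{L}_{\mathcal{D}_j}\mathbf{1}\neq\mathbf{0}$ in general) is a useful clarification that the paper leaves implicit.
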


\begin{proof}
By Corollary \ref{generalized lemma}, it
holds that
\[
\frac{\tilde{S}_{y_{i}y_{k}}(\omega)}{S_{w}(\omega)}=\mathbf{e}_{i}^{T}(\omega
^{2}I-\im \omega (\tilde{L}_{\mathcal{D}_{j}}-\tilde{L}_{\mathcal{D}_{j}%
}^{T})+\tilde{L}_{\mathcal{D}_{j}}^{T}\tilde{L}_{\mathcal{D}_{j}}%
)^{-1}\mathbf{e}_{k}.
\]
Since $\mathbf{\tilde{S}}(\omega)=[\tilde{S}_{y_{i}y_{k}}(\omega)]$, we have that
\begin{align}
\text{Re}\{\tilde{\mathbf{S}}^{-1}(\omega)\}=\frac{1}{S_{w}(\omega)}(\omega^{2}I+\tilde
{L}_{\mathcal{D}_{j}}^{T}\tilde{L}_{\mathcal{D}_{j}}). \label{tilde}
\end{align}
Substituting the expression for $S_{w}(\omega)$ from Lemma \ref{Lem2},
we have 
\[
\tilde{L}_{\mathcal{D}_{j}}^{T}\tilde{L}_{\mathcal{D}_{j}}=\omega^{2}\left(
\frac{\text{Re}\{\tilde{\mathbf{S}}^{-1}(\omega)\}}{\left[  \text{Re}\{\mathbf{S}%
^{-1}(\omega)\}\mathbf{1}\right]  _{i}}-I\right)  .
\]

\end{proof}

In general, the results of Lemma \ref{Lem2} might not be informative enough to extract the underlying structure of the network without running the grounded consensus. We will see at the end of this section how running both ungrounded and grounded consensus will lead to exact reconstruction of the network. However, imposing certain conditions on the network graph, allows us to perform the reconstruction without recourse to grounding. For instance, for a network that does not contain any bidirectional edge, i.e., if $\left(v_{j},v_{i}\right) \in \mathcal{E}_{d}$, then $\left(
v_{i},v_{j}\right) \not\in \mathcal{E}_{d}$, we can employ Lemma \ref{Lem2} to identify the network. The adjacency matrix of a purely unidirectional network satisfies 
\begin{align}
\text{tr}(A^2_{\mathcal{D}})=0. \label{reciprocity}
\end{align}
We now proceed to the identification technique of a purely unidirectional network. 

\begin{corollary}
Suppose the conditions in the Lemma \ref{Lem2} hold. Also, suppose the weighted, directed network satisfies \eqref{reciprocity}. Then, the entries of adjacency are recovered as
\begin{align*}
\left[ A_{\mathcal{D}}\right] _{ij}=\max \bigg\{ \omega n \bigg(\frac{[\text{Im}\{\mathbf{S}^{-1}(\omega)\}]_{ij}}{\mathbf{1}^T \text{Re}\{\mathbf{S}^{-1}(\omega)\}\mathbf{1}} \bigg)   , 0 \bigg\}.\\
\end{align*} 
\end{corollary}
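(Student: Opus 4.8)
The plan is to start from the two identities in Lemma~\ref{Lem2} and exploit the purely unidirectional hypothesis \eqref{reciprocity} to pin down the off-diagonal entries of $A_{\mathcal{D}}$. From Lemma~\ref{Lem2} we already have
\begin{equation*}
L_{\mathcal{D}}-L_{\mathcal{D}}^{T}=-\omega\,\frac{\text{Im}\{\mathbf{S}^{-1}(\omega)\}}{[\text{Re}\{\mathbf{S}^{-1}(\omega)\}\mathbf{1}]_{i}},
\end{equation*}
and, using the averaging trick noted after Theorem~\ref{Undirected Laplacian} (replacing $[\text{Re}\{\mathbf{S}^{-1}(\omega)\}\mathbf{1}]_{i}$ by $\mathbf{1}^{T}\text{Re}\{\mathbf{S}^{-1}(\omega)\}\mathbf{1}/n$, which is legitimate since $L_{\mathcal{D}}^{T}L_{\mathcal{D}}\mathbf{1}=\mathbf{0}$ forces every entry of $\text{Re}\{\mathbf{S}^{-1}\}\mathbf{1}$ to be $\omega^{2}/S_{w}(\omega)$), we obtain, entrywise,
\begin{equation*}
[L_{\mathcal{D}}]_{ij}-[L_{\mathcal{D}}^{T}]_{ij}=[L_{\mathcal{D}}]_{ij}-[L_{\mathcal{D}}]_{ji}=-\omega n\,\frac{[\text{Im}\{\mathbf{S}^{-1}(\omega)\}]_{ij}}{\mathbf{1}^{T}\text{Re}\{\mathbf{S}^{-1}(\omega)\}\mathbf{1}}.
\end{equation*}
First I would recall that for the directed Laplacian $L_{\mathcal{D}}=D_{\mathcal{D}}-A_{\mathcal{D}}$, so for $i\neq j$ we have $[L_{\mathcal{D}}]_{ij}=-[A_{\mathcal{D}}]_{ij}$; hence the left-hand side equals $[A_{\mathcal{D}}]_{ji}-[A_{\mathcal{D}}]_{ij}$.

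The key step is then to argue that, under \eqref{reciprocity}, at most one of the two quantities $[A_{\mathcal{D}}]_{ij}$ and $[A_{\mathcal{D}}]_{ji}$ is nonzero. Indeed, $\text{tr}(A_{\mathcal{D}}^{2})=\sum_{i,j}[A_{\mathcal{D}}]_{ij}[A_{\mathcal{D}}]_{ji}$, and since all edge weights are assumed nonnegative, every summand is nonnegative, so $\text{tr}(A_{\mathcal{D}}^{2})=0$ forces $[A_{\mathcal{D}}]_{ij}[A_{\mathcal{D}}]_{ji}=0$ for every pair $(i,j)$ — i.e.\ no bidirectional edges, as stated. Consequently the difference $[A_{\mathcal{D}}]_{ji}-[A_{\mathcal{D}}]_{ij}$ is exactly $[A_{\mathcal{D}}]_{ji}$ when that entry is positive (and then $[A_{\mathcal{D}}]_{ij}=0$), is $-[A_{\mathcal{D}}]_{ij}$ when $[A_{\mathcal{D}}]_{ij}$ is positive, and is $0$ when both vanish. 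Taking the positive part $\max\{\cdot,0\}$ therefore recovers $[A_{\mathcal{D}}]_{ij}$ itself, since the off-diagonal adjacency entries of the network with edges pointing into node $i$ appear in row $i$, column $j$; substituting the displayed expression for $[A_{\mathcal{D}}]_{ji}-[A_{\mathcal{D}}]_{ij}$ gives
\begin{equation*}
[A_{\mathcal{D}}]_{ij}=\max\bigg\{\omega n\,\frac{[\text{Im}\{\mathbf{S}^{-1}(\omega)\}]_{ij}}{\mathbf{1}^{T}\text{Re}\{\mathbf{S}^{-1}(\omega)\}\mathbf{1}},\,0\bigg\},
\end{equation*}
which is the claim.

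The main obstacle I anticipate is bookkeeping of the transpose/index conventions: the paper defines $A_{\mathcal{D}}=[d_{ij}]$ with $d_{ij}=\mathcal{F}_{d}((v_{j},v_{i}))$, so ``edge from $v_{j}$ to $v_{i}$'' lives in entry $(i,j)$, and one must be careful that the sign coming out of $L_{\mathcal{D}}-L_{\mathcal{D}}^{T}$ matches the side on which $[A_{\mathcal{D}}]_{ij}$ (as opposed to $[A_{\mathcal{D}}]_{ji}$) is nonzero; a sign error here would flip which entry the $\max$ picks out. The only genuine mathematical input beyond Lemma~\ref{Lem2} is the nonnegativity of the weights, which is what makes $\text{tr}(A_{\mathcal{D}}^{2})=0$ equivalent to the pairwise product vanishing and what licenses reading off a single entry via the positive part; I would state that nonnegativity assumption explicitly if it is not already in force. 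Everything else is the substitution of \eqref{imaginary} and \eqref{average2} (with the $\mathbf{1}^{T}(\cdot)\mathbf{1}/n$ averaging) into the entrywise identity, which is routine.
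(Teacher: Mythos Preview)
Your proposal is correct and follows essentially the same route as the paper: invoke the second identity of Lemma~\ref{Lem2}, replace the row-sum normalizer by $\mathbf{1}^{T}\text{Re}\{\mathbf{S}^{-1}\}\mathbf{1}/n$, rewrite the off-diagonal entries of $L_{\mathcal{D}}-L_{\mathcal{D}}^{T}$ as $[A_{\mathcal{D}}]_{ji}-[A_{\mathcal{D}}]_{ij}$, and then use the no-bidirectional-edge hypothesis to read off a single entry via the positive part. Your version is slightly more explicit than the paper's in justifying why $\text{tr}(A_{\mathcal{D}}^{2})=0$ forces $[A_{\mathcal{D}}]_{ij}[A_{\mathcal{D}}]_{ji}=0$ (you correctly flag the tacit nonnegativity of the weights), and your self-diagnosed bookkeeping worry is warranted---your prose briefly says the positive part of $[A_{\mathcal{D}}]_{ji}-[A_{\mathcal{D}}]_{ij}$ recovers $[A_{\mathcal{D}}]_{ij}$ when it actually recovers $[A_{\mathcal{D}}]_{ji}$---but the sign flips back when you substitute, so the displayed conclusion is right.
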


\begin{proof}
For any $k$, by \eqref{average2} we have 
\begin{align*}
[\text{Re}\{\mathbf{S}^{-1}(\omega)\}\mathbf{1}]_k=\frac{1}{n}(\mathbf{1}^T \text{Re}\{\mathbf{S}^{-1}(\omega)\}\mathbf{1}).
\end{align*}
Therefore, under purview of Lemma \ref{Lem2} we obtain
\begin{align*}
L_{\mathcal{D}}-L^T_{\mathcal{D}}=-\omega n \bigg(\frac{\text{Im}\{\mathbf{S}^{-1}(\omega)\}}{\mathbf{1}^T \text{Re}\{\mathbf{S}^{-1}(\omega)\}\mathbf{1}} \bigg).
\end{align*}
If the $ij$-th entry of the right hand side in the above equation is negative, since there is no bidirectional edge, $\left[ A_{\mathcal{D}}\right] _{ij}\neq 0$ and $\left[ A_{\mathcal{D}}\right] _{ji}=0$. If the $ij$-th entry of the right hand side in the above equation is positive, the $ji$-th entry would be negative which implies $\left[ A_{\mathcal{D}}\right] _{ji}\neq 0$ and $\left[ A_{\mathcal{D}}\right] _{ij}=0$. In case the $ij$-th entry is zero, no directed edge between $v_i$ and $v_j$ exists, and the proof is complete. 
\end{proof}

In the next theorem, we show using grounding, there is no need for structural conditions  to solve Problem \ref{Directed Consensus Problem}.

\begin{theorem}
\label{Directed adjacency}Consider the network dynamics \eqref{dynamics},
where $\mathcal{D}$ is a weighted, directed network. Let us also consider
the grounded consensus dynamics in (\ref{groundedlaplacian}), when node $%
v_{j}$ is grounded. Then, given assumptions (A1)-(A2), we can recover the
entries of the weighted, directed adjacency matrix $A_{\mathcal{D}}$ as%
\begin{equation*}
\left[ A_{\mathcal{D}}\right] _{ji}=\left\{ 
\begin{array}{ll}
\sqrt{\left[ L_{\mathcal{D}}^{T}L_{\mathcal{D}}\right] _{ii}-\left[ \tilde{L}%
_{\mathcal{D}_{j}}^{T}\tilde{L}_{\mathcal{D}_{j}}\right] _{ii}}\text{} & 
\text{for }i<j, \\ 
\sqrt{\left[ L_{\mathcal{D}}^{T}L_{\mathcal{D}}\right] _{ii}-\left[ 
\tilde{L}_{\mathcal{D}_{j}}^{T}\tilde{L}_{\mathcal{D}_{j}}\right] _{i-1,i-1}}%
\text{} & \text{for }i> j,%
\end{array}%
\right. 
\end{equation*}%
for any $i\neq j$.
\end{theorem}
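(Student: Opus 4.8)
The plan is to reduce the statement to a purely algebraic identity relating the diagonal entries of the two Gram matrices we already know how to measure: $L_{\mathcal{D}}^{T}L_{\mathcal{D}}$, available from the ungrounded consensus by Lemma~\ref{Lem2}, and $\tilde{L}_{\mathcal{D}_{j}}^{T}\tilde{L}_{\mathcal{D}_{j}}$, available from the consensus grounded at $v_{j}$ by Corollary~\ref{LDtilde}. Concretely, writing $(i)$ for the position that index $i$ occupies after the $j$-th row and column have been deleted (so $(i)=i$ when $i<j$ and $(i)=i-1$ when $i>j$), I aim to prove
\[
[A_{\mathcal{D}}]_{ji}^{2}=\big[L_{\mathcal{D}}^{T}L_{\mathcal{D}}\big]_{ii}-\big[\tilde{L}_{\mathcal{D}_{j}}^{T}\tilde{L}_{\mathcal{D}_{j}}\big]_{(i)(i)},
\]
after which taking the positive square root and splitting into the cases $i<j$ and $i>j$ yields exactly the asserted formula.

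The first step is to expand a diagonal Gram entry as a squared column norm: $[L_{\mathcal{D}}^{T}L_{\mathcal{D}}]_{ii}=\|L_{\mathcal{D}}\mathbf{e}_{i}\|^{2}=\sum_{k=1}^{n}[L_{\mathcal{D}}]_{ki}^{2}$. Using $L_{\mathcal{D}}=D_{\mathcal{D}}-A_{\mathcal{D}}$ and the absence of self-loops, the $k=i$ summand is $\deg_{in}(v_{i})^{2}$ and each $k\neq i$ summand is $[A_{\mathcal{D}}]_{ki}^{2}$, so $[L_{\mathcal{D}}^{T}L_{\mathcal{D}}]_{ii}=\deg_{in}(v_{i})^{2}+\sum_{k\neq i}[A_{\mathcal{D}}]_{ki}^{2}$.

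The key observation is then that grounding at $v_{j}$ deletes the $j$-th row and column of $L_{\mathcal{D}}$ while leaving every surviving diagonal (in-degree) entry untouched; hence, for $i\neq j$, the $v_{i}$-column of $\tilde{L}_{\mathcal{D}_{j}}$ is the $i$-th column of $L_{\mathcal{D}}$ with its $j$-th coordinate removed. Repeating the column-norm expansion gives $[\tilde{L}_{\mathcal{D}_{j}}^{T}\tilde{L}_{\mathcal{D}_{j}}]_{(i)(i)}=\deg_{in}(v_{i})^{2}+\sum_{k\neq i,\,k\neq j}[A_{\mathcal{D}}]_{ki}^{2}$. Subtracting, the degree-squared terms and all off-diagonal squares except the $k=j$ one cancel, leaving exactly $[A_{\mathcal{D}}]_{ji}^{2}$, which is the displayed identity. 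Substituting the expressions for $L_{\mathcal{D}}^{T}L_{\mathcal{D}}$ and $\tilde{L}_{\mathcal{D}_{j}}^{T}\tilde{L}_{\mathcal{D}_{j}}$ from Lemma~\ref{Lem2} and Corollary~\ref{LDtilde} completes the reconstruction.

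The computation is short, so the only points that need care are the combinatorial ones: that the grounding operation genuinely preserves $\deg_{in}(v_{i})$ for $i\neq j$ (which is what makes the degree terms cancel) and the correct translation between the original index $i$ and its reduced-matrix position $(i)$, which produces the split into $i<j$ and $i>j$. I would also make explicit (or appeal to it being part of the standing model) that the consensus edge weights are nonnegative, so that the square root recovers $[A_{\mathcal{D}}]_{ji}$ itself rather than merely its absolute value.
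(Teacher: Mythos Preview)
Your proposal is correct and follows essentially the same route as the paper: reduce to the algebraic identity that the $i$-th diagonal Gram entry of $L_{\mathcal{D}}$ minus that of the grounded $\tilde{L}_{\mathcal{D}_j}$ equals $[L_{\mathcal{D}}]_{ji}^{2}=[A_{\mathcal{D}}]_{ji}^{2}$, with $L_{\mathcal{D}}^{T}L_{\mathcal{D}}$ and $\tilde{L}_{\mathcal{D}_j}^{T}\tilde{L}_{\mathcal{D}_j}$ supplied by Lemma~\ref{Lem2} and Corollary~\ref{LDtilde}. The paper does the subtraction slightly more directly (writing $\sum_{k}[L_{\mathcal{D}}]_{ki}^{2}-\sum_{k\neq j}[L_{\mathcal{D}}]_{ki}^{2}=[L_{\mathcal{D}}]_{ji}^{2}$ without splitting off the degree term), and it does not explicitly flag the nonnegativity of the weights needed for the square root---a point you rightly raise.
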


\begin{proof}
For simplicity, we consider the case $j=n$ (for any $j\neq n$, we can
transform the problem to the case $j=n$ via a simple reordering of rows and
columns). Running the ungrounded consensus \eqref{dynamics} and applying Lemma
\ref{Lem2}, we recover $L_{\mathcal{D}}^{T}L_{\mathcal{D}}$. Also, running the
grounded consensus \eqref{groundedlaplacian} and applying Corollary
\ref{LDtilde}, we can recover $\tilde{L}_{\mathcal{D}_{n}}^{T}\tilde
{L}_{\mathcal{D}_{n}}$. Hence, for any $i<n$%
\[
\lbrack L_{\mathcal{D}}^{T}L_{\mathcal{D}}]_{ii}-[\tilde{L}_{\mathcal{D}_{n}%
}^{T}\tilde{L}_{\mathcal{D}_{n}}]_{ii}=\sum_{k}\left[  L_{\mathcal{D}}\right]
_{ki}^{2}-\sum_{k\neq n}\left[  L_{\mathcal{D}}\right]  _{ki}^{2}=\left[
L_{\mathcal{D}}\right]  _{ni}^{2}.
\]
Hence, we can recover the adjacency as%
\[
\left[  A_{\mathcal{D}}\right]  _{ni}=\sqrt{\left[  L_{\mathcal{D}}%
^{T}L_{\mathcal{D}}\right]  _{ii}-\left[  \tilde{L}_{\mathcal{D}_{j}}%
^{T}\tilde{L}_{\mathcal{D}_{j}}\right]  _{ii}},
\]
for any $i<n$. The same analysis holds for any other $j\neq n$.
\end{proof}

\subsection{Boolean Reconstruction of General Dynamics}
In this section, we study the reconstruction of 
a general dynamics $G_\mathcal{D}$ in weighted, directed networks, characterized as
\begin{equation}
\dot{x}(t)=G_\mathcal{D} x(t)+\mathbf{w}(t)\ ,\ y(t)=x(t),  \label{general dynamics}
\end{equation}
where $G_\mathcal{D}$ is a negative semi-definite matrix.

It turns out that for the general dynamics \eqref{general dynamics}, complete identification including weights is not possible. However, we can propose an algorithm for a Boolean reconstruction of the system which detects the existence and direction of edges in the directed network \eqref{general dynamics}.

\begin{problem}
\label{General Problem}Consider the dynamical network in %
\eqref{general dynamics}, where $\mathcal{D}$ is an unknown weighted, directed
graph. Reconstruct an unweighted version of $\mathcal{D}$, from the empirical (cross-)power spectral densities of the
outputs, i.e., $S_{y_{i}}(\omega )$ and $S_{y_{i}y_{j}}(\omega )$ for all $%
1\leq i,j\leq n$.
\end{problem}

To solve Problem \ref{General Problem}, similar to Definition \ref{groundedcon}, we define the {\it Grounded Dynamics} at node $v_{j}$ as
\begin{equation}
\dot{x}(t)=\tilde{G}_{\mathcal{D}_{j}}x(t)+\mathbf{w}(t)\ ,\ y(t)=x(t),
\label{groundeddynamics}
\end{equation}%
where $\tilde{G}_{\mathcal{D}_{j}}\in \mathbb{R}^{(n-1)\times (n-1)}$ is
obtained by eliminating the $j$-th row and the $j$-th column from $G_{%
\mathcal{D}}$. We now state a theorem to reconstruct an unweighted version of $\mathcal{D}$. 

\begin{theorem}
\label{Boolean} Consider the network dynamics \eqref{general dynamics},
where $\mathcal{D}$ is a weighted, directed network. Let us also consider
the grounded consensus dynamics in (\ref{groundeddynamics}), when node $%
v_{j}$ is grounded. Then, given assumptions (A1)-(A2), we can recover
a scaled version of off diagonal entires for matrix $G_{\mathcal{D}}$ as
\begin{equation*}
\left[ G_{\mathcal{D}}\right] _{ji}=\left\{ 
\begin{array}{ll}
\sqrt{S_{w}[\text{Re}\{[ \mathbf{S}^{-1}] _{ii}-[\tilde{\mathbf{S}}^{-1}] _{ii}\}}]& \text{for }i<j, \\ 
\sqrt {S_{w}[{\text{Re}\{[ \mathbf{S}^{-1}] _{ii}-[ 
 \tilde{\mathbf{S}}^{-1}] _{i-1,i-1}}\}]}%
 &\text{for }i > j,%
\end{array}%
\right. 
\end{equation*}
for any $i\neq j$.
\end{theorem}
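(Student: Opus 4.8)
The plan is to repeat the argument behind Theorem~\ref{Directed adjacency}, using Corollary~\ref{generalized lemma} in place of Lemma~\ref{Power Spectral Densities}. Applying Corollary~\ref{generalized lemma} to the ungrounded dynamics \eqref{general dynamics} gives $\mathbf{S}(\omega)=S_w(\omega)(\omega^2 I-\im\omega(G_{\mathcal{D}}^T-G_{\mathcal{D}})+G_{\mathcal{D}}^TG_{\mathcal{D}})^{-1}$, hence $\mathbf{S}^{-1}(\omega)=\frac{1}{S_w(\omega)}(\omega^2 I-\im\omega(G_{\mathcal{D}}^T-G_{\mathcal{D}})+G_{\mathcal{D}}^TG_{\mathcal{D}})$. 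Since $G_{\mathcal{D}}^T-G_{\mathcal{D}}$ is real and skew-symmetric while $\omega^2 I+G_{\mathcal{D}}^TG_{\mathcal{D}}$ is real and symmetric, the term $-\im\omega(G_{\mathcal{D}}^T-G_{\mathcal{D}})$ is purely imaginary entrywise, so taking the real part isolates the symmetric part: $\text{Re}\{\mathbf{S}^{-1}(\omega)\}=\frac{1}{S_w(\omega)}(\omega^2 I+G_{\mathcal{D}}^TG_{\mathcal{D}})$. I would then observe that the grounded dynamics \eqref{groundeddynamics} has exactly the same form with $\tilde{G}_{\mathcal{D}_j}$ replacing $G_{\mathcal{D}}$, and that $\tilde{G}_{\mathcal{D}_j}$, being a principal submatrix of the negative semi-definite $G_{\mathcal{D}}$, is itself negative semi-definite; hence Corollary~\ref{generalized lemma} applies again and yields $\text{Re}\{\tilde{\mathbf{S}}^{-1}(\omega)\}=\frac{1}{S_w(\omega)}(\omega^2 I+\tilde{G}_{\mathcal{D}_j}^T\tilde{G}_{\mathcal{D}_j})$.

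Next, exactly as in Theorem~\ref{Directed adjacency}, I would take $j=n$ without loss of generality; the general case reduces to this one by a reordering of rows and columns, which accounts for the index shift $i\mapsto i-1$ for $i>j$ in the statement. Subtracting the two expressions above on the diagonal gives, for $i<n$, $S_w(\omega)([\text{Re}\{\mathbf{S}^{-1}(\omega)\}]_{ii}-[\text{Re}\{\tilde{\mathbf{S}}^{-1}(\omega)\}]_{ii})=[G_{\mathcal{D}}^TG_{\mathcal{D}}]_{ii}-[\tilde{G}_{\mathcal{D}_n}^T\tilde{G}_{\mathcal{D}_n}]_{ii}$. Since $[G_{\mathcal{D}}^TG_{\mathcal{D}}]_{ii}=\sum_k[G_{\mathcal{D}}]_{ki}^2$ whereas deleting row $n$ removes only the $k=n$ summand, i.e. $[\tilde{G}_{\mathcal{D}_n}^T\tilde{G}_{\mathcal{D}_n}]_{ii}=\sum_{k\neq n}[G_{\mathcal{D}}]_{ki}^2$, the difference collapses to $[G_{\mathcal{D}}]_{ni}^2$, and taking square roots produces the claimed formula. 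Two small items remain to be recorded: the quantity under the root equals $[G_{\mathcal{D}}]_{ni}^2/S_w(\omega)\geq 0$, so the expression is well defined; and the diagonal entries of $\mathbf{S}^{-1}$ and $\tilde{\mathbf{S}}^{-1}$ are already real (both matrices are Hermitian), so the $\text{Re}\{\cdot\}$ on the diagonal is a harmless formality.

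The only genuine obstacle — and the reason the conclusion is a scaled (Boolean) reconstruction rather than the exact recovery of Theorem~\ref{Directed adjacency} — is that $S_w(\omega)$ is no longer identifiable. In the Laplacian setting, Lemma~\ref{Lem2} pinned down $S_w(\omega)$ through $L_{\mathcal{D}}^TL_{\mathcal{D}}\mathbf{1}=\mathbf{0}$; for a general generator $G_{\mathcal{D}}$ the vector $G_{\mathcal{D}}^TG_{\mathcal{D}}\mathbf{1}$ need not vanish, so that normalization is unavailable and $S_w(\omega)$ stays inside the square root. Consequently what one recovers is $\sqrt{S_w(\omega)}\,|[G_{\mathcal{D}}]_{ji}|$, a positive multiple of $|[G_{\mathcal{D}}]_{ji}|$ because $S_w(\omega)>0$ on $(-\Omega,\Omega)$; it vanishes precisely when $[G_{\mathcal{D}}]_{ji}=0$, so it still detects the existence and direction of every edge of $\mathcal{D}$, which is what Problem~\ref{General Problem} requires.
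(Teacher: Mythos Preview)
Your proof is correct and follows essentially the same route as the paper: apply Corollary~\ref{generalized lemma} to both the ungrounded and grounded dynamics, take real parts to isolate $\omega^{2}I+G_{\mathcal{D}}^{T}G_{\mathcal{D}}$ and $\omega^{2}I+\tilde{G}_{\mathcal{D}_j}^{T}\tilde{G}_{\mathcal{D}_j}$, subtract on the diagonal, and collapse the difference to $[G_{\mathcal{D}}]_{ji}^{2}$ exactly as in Theorem~\ref{Directed adjacency}. One minor slip in your closing commentary: the quantity under the root in the stated formula is $[G_{\mathcal{D}}]_{ji}^{2}$ (not $[G_{\mathcal{D}}]_{ji}^{2}/S_w$), so what is \emph{observable} without knowing $S_w$ is $|[G_{\mathcal{D}}]_{ji}|/\sqrt{S_w(\omega)}$ rather than $\sqrt{S_w(\omega)}\,|[G_{\mathcal{D}}]_{ji}|$; this does not affect the Boolean conclusion, since either scaling vanishes iff $[G_{\mathcal{D}}]_{ji}=0$.
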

\begin{proof}
Without loss of generality, let $j=n$ as in the proof of Theorem \ref{Directed adjacency}.
By Corollary \ref{generalized lemma}, it holds that
\begin{align*}  
& S_{w}(\omega) \text{Re}\{\mathbf{S}^{-1}(\omega)\}=\omega
^{2}I+G_{\mathcal{D}}^{T}G_{\mathcal{D}}\\
&S_{w}(\omega)\text{Re}\{\tilde{\mathbf{S}}^{-1}(\omega)\}=\omega^{2}I+\tilde
{G}_{\mathcal{D}_{n}}^{T}\tilde{G}_{\mathcal{D}_{n}},
\end{align*}
which entails, for any $i<n$
\begin{align*}
\left[ G_{\mathcal{D}}\right]^2 _{ni}&=[G_{\mathcal{D}}^{T}G_{\mathcal{D}}]_{ii}-[\tilde
{G}_{\mathcal{D}_{n}}^{T}\tilde{G}_{\mathcal{D}_{n}}]_{ii}\\
&=S_{w}(\omega)\bigg([\text{Re}\{\mathbf{S}^{-1}(\omega)\}]_{ii}-[\text{Re}\{\tilde{\mathbf{S}}^{-1}(\omega)\}]_{ii} \bigg),
\end{align*}
which implies the result of the theorem. 
The proof for $j\neq n$ follows with similar arguments.
\end{proof}

\begin{remark}
Results of previous sections regarding Laplacian (Sect. III.A-B), could be extended to any negative semi-definite $G_\mathcal{D}$ with a nonzero nullity, as long as a nonzero vector in the kernel  of $G_\mathcal{D}$ is known.
\end{remark}

Theorems \ref{Undirected Laplacian}, \ref{Directed adjacency} and \ref{Boolean} are
functional for any frequency $\omega \neq 0$, and we do not require the
knowledge of entire power spectral densities. In fact, we only need the
power spectral densities evaluated at one frequency, which dramatically
reduces the computational complexity. One can also evaluate the spectra at
several frequencies to average out the measurement noise and make a robust
reconstruction.

\section{CONCLUSIONS}

In this paper, we addressed the problem of identifying the topology of an unknown,
weighted, directed network running a consensus dynamics when the system is stimulated by a wide-sense stationary noise of unknown power spectral density. We proposed a methodology for network reconstruction based on sequentially grounding nodes in the system. We showed how to reconstruct the topology of the network from the empirical cross-power spectral densities of the outputs between every pair of nodes. We also established that, in the special cases of undirected (i.e., perfect edge reciprocity) or purely unidirectional networks (i.e., no edge reciprocity), our reconstruction procedure does not require any grounding. Finally, we extended our results to the case of a directed network with a general dynamics, and proved that the developed method can detect the presence of edges and their direction.

\section*{ACKNOWLEDGMENTS}
The authors would like to thank Ali Jadbabaie for helpful comments and discussions.

\addtolength{\textheight}{-12cm} 




\bibliographystyle{unsrt}
\bibliography{shahin}

\begin{thebibliography}{10}

\bibitem{strogatz2001exploring}
S.H. Strogatz.
\newblock ``{Exploring Complex Networks}".
\newblock {\em Nature}, vol. 410, no. 6825, pp. 268--276, 2001.

\bibitem{jackson2008social}
M.O. Jackson.
\newblock {\em ``{Social and Economic Networks}"}.
\newblock Princeton Univ Pr, 2008.

\bibitem{newman2003structure}
M.E.J. Newman.
\newblock ``{The Structure and Function of Complex Networks}".
\newblock {\em SIAM Review}, pages pp. 167--256, 2003.

\bibitem{boccaletti2006complex}
S.~Boccaletti, V.~Latora, Y.~Moreno, M.~Chavez, and D.U. Hwang.
\newblock ``{Complex Networks: Structure and Dynamics}".
\newblock {\em Physics Reports}, vol. 424, no. 4, pp. 175--308, 2006.

\bibitem{bonneau2006inferelator}
R.~Bonneau, D.J. Reiss, P.~Shannon, M.~Facciotti, L.~Hood, N.S. Baliga, and
  V.~Thorsson.
\newblock ``{The Inferelator: an Algorithm for Learning Parsimonious Regulatory
  Networks from Systems-Biology Data Sets De Novo}".
\newblock {\em Genome Biology}, vol. 7, no. 5, p. R36, 2006.

\bibitem{geier2007reconstructing}
F.~Geier, J.~Timmer, and C.~Fleck.
\newblock ``{Reconstructing Gene-Regulatory Networks from Time Series,
  Knock-out Data, and Prior Knowledge}".
\newblock {\em BMC Systems Biology}, vol. 1, no. 1, p. 11, 2007.

\bibitem{bansal2007infer}
M.~Bansal, V.~Belcastro, A.~Ambesi-Impiombato, and D.~Di~Bernardo.
\newblock ``{How to Infer Gene Networks from Expression Profiles}".
\newblock {\em Molecular Systems Biology}, vol. 3, no. 1, 2007.

\bibitem{julius2009genetic}
A.~Julius, M.~Zavlanos, S.~Boyd, and G.J. Pappas.
\newblock ``{Genetic Network Identification Using Convex Programming}".
\newblock {\em Systems Biology, IET}, vol. 3, no. 3, pp. 155--166, 2009.

\bibitem{boccaletti2007detecting}
S.~Boccaletti, M.~Ivanchenko, V.~Latora, A.~Pluchino, and A.~Rapisarda.
\newblock ``{Detecting Complex Network Modularity by Dynamical Clustering}".
\newblock {\em Physical Review E}, vol. 75, no. 4, p. 045102, 2007.

\bibitem{timme2007revealing}
M.~Timme.
\newblock ``{Revealing Network Connectivity from Response Dynamics}".
\newblock {\em Physical Review Letters}, vol. 98, no. 22, p. 224101, 2007.

\bibitem{napoletani2008reconstructing}
D.~Napoletani, T.D. Sauer, et~al.
\newblock ``{Reconstructing the Topology of Sparsely Connected Dynamical
  Networks}".
\newblock {\em Physical Review-Section E-Statistical Nonlinear and Soft Matter
  Physics}, vol. 77, no. 2, p. 26103, 2008.

\bibitem{mantegna2000introduction}
R.N. Mantegna and H.E. Stanley.
\newblock {\em ``{An Introduction to Econophysics: Correlations and Complexity
  in Finance}"}.
\newblock Cambridge Univ Pr, 2000.

\bibitem{candes2008enhancing}
E.J. Candes, M.B. Wakin, and S.P. Boyd.
\newblock ``{Enhancing Sparsity by Reweighted $\ell_1$ Minimization}".
\newblock {\em Journal of Fourier Analysis and Applications}, vol. 14, no. 5,
  pp. 877--905, 2008.

\bibitem{gonccalves2008necessary}
J.~Gon{\c{c}}alves and S.~Warnick.
\newblock ``{Necessary and Sufficient Conditions for Dynamical Structure
  Reconstruction of LTI Networks}".
\newblock {\em IEEE Transactions on Automatic Control}, vol. 53, no. 7, pp.
  1670--1674, 2008.

\bibitem{yuan2011robust}
Y.~Yuan, G.B. Stan, S.~Warnick, and J.~Goncalves.
\newblock ``{Robust Dynamical Network Structure Reconstruction}".
\newblock {\em Automatica}, 2011.

\bibitem{materassi2009unveiling}
D.~Materassi and G.~Innocenti.
\newblock ``{Unveiling the Connectivity Structure of Financial Networks via
  High-Frequency Analysis}".
\newblock {\em Physica A: Statistical Mechanics and its Applications}, vol.
  388, no. 18, pp. 3866--3878, 2009.

\bibitem{materassi2010topological}
D.~Materassi and G.~Innocenti.
\newblock ``{Topological Identification in Networks of Dynamical Systems}".
\newblock {\em IEEE Transactions on Automatic Control}, vol. 55, no. 8, pp.
  1860--1871, 2010.

\bibitem{materassi2012problem}
D.~Materassi and M.V. Salapaka.
\newblock ``{On the Problem of Reconstructing an Unknown Topology via Locality
  Properties of the Wiener Filter}".
\newblock {\em IEEE Transactions on Automatic Control}, vol. 57, no. 7, pp.
  1765--1777, 2012.

\bibitem{nabi2011sieve}
M.~Nabi-Abdolyousefi and M.~Mesbahi.
\newblock ``{A Sieve Method for Consensus-type Network Tomography}".
\newblock {\em Arxiv preprint arXiv:1111.0683}, 2011.

\bibitem{6203379}
M.~Nabi-Abdolyousefi and M.~Mesbahi.
\newblock ``{Network Identification via Node Knockout}".
\newblock {\em IEEE Transactions on Automatic Control}, vol. PP, no. 99, p. 1,
  2012.

\bibitem{NewmanBOOK}
M.~Newman.
\newblock {\em ``{Networks: An Introduction}"}.
\newblock Cambridge University Press, 2010.

\bibitem{biggs1993algebraic}
N.~Biggs.
\newblock {\em ``{Algebraic Graph Theory}"}.
\newblock Cambridge Univ Pr, 1993.

\bibitem{welch1967use}
P.~Welch.
\newblock ``{The Use of Fast Fourier Transform for the Estimation of Power
  Spectra: a Method Based on Time Averaging over Short, Modified
  Periodograms}".
\newblock {\em IEEE Transactions on Audio and Electroacoustics}, vol. 15, no.
  2, pp. 70--73, 1967.

\end{thebibliography}

\end{document}